\newtheorem{theorem}{Theorem}[section]
\newtheorem{lemma}[theorem]{Lemma}
\newtheorem{corollary}[theorem]{Corollary}
\newcommand{\suchthat}{\;\ifnum\currentgrouptype=16 \middle\fi|\;}
\newcommand{\F}{\mathbb{F}}
\newcommand{\R}{\mathbb{R}}
\DeclareMathOperator{\ent}{H}
\newcommand{\Prob}{\mathbb{P}}
\newcommand{\E}{{\rm I\kern-.3em E}}
\newcommand{\Var}{\mathrm{Var}}
\newcommand{\rcov}{r_{\mathrm{cov}}}
\newcommand{\rpack}{r_{\mathrm{pack}}}
\title{Improved efficiency for  covering codes matching the sphere-covering bound}
\author{
\IEEEauthorblockN{
Aditya Potukuchi\IEEEauthorrefmark{1}\thanks{The work of Aditya Potukuchi was supported by Mario Szegedy\rq{}s NSF grant CCF-1514164.}
Yihan Zhang\IEEEauthorrefmark{2}\thanks{The work of Yihan Zhang was supported by the Research Grants Council (RGC) of Hong Kong under Project Numbers 14300617, 14304418 and 14301519.}
}\\
\IEEEauthorblockA{
\IEEEauthorrefmark{1} Dept. of Computer Science,  Rutgers University, \href{mailto:aditya.potukuchi@cs.rutgers.edu}{aditya.potukuchi@cs.rutgers.edu}\\
\IEEEauthorrefmark{2} Dept.\ of Information Engineering, The Chinese University of Hong Kong, \href{mailto:zy417@ie.cuhk.edu.hk}{zy417@ie.cuhk.edu.hk}
}
}
\begin{document}

\maketitle

\thispagestyle{empty} 

\title{Improved efficiency for covering codes matching the sphere-covering bound}

\maketitle

\begin{abstract}
A covering code is a subset $\mathcal{C} \subseteq \{0,1\}^n$ with the property that any $z \in \{0,1\}^n$ is close to some $c \in \mathcal{C}$ in Hamming distance. For every $\epsilon,\delta>0$, we show a construction of a family of codes with relative covering radius $\delta + \epsilon$ and rate $1 - \ent(\delta) $ with  block length at most $\exp(O((1/\epsilon) \log (1/\epsilon)))$ for every $\epsilon > 0$. This improves upon a folklore construction which only guaranteed codes of block length $\exp(1/\epsilon^2)$. The main idea behind this proof is to find a distribution on codes with relatively small support such that most of these codes have good covering properties.
\end{abstract}

\section{Introduction}
{For a subset $\mathcal{C} \subseteq \{0,1\}^n$, we define the \emph{covering radius} $\rcov(\mathcal{C})$ as the smallest $r$ such that every $z \in \{0,1\}^n$ is at most a distance $r$ from some $c \in \mathcal{C}$ in Hamming distance. Equivalently, we say that $\mathcal{C}$ is a \emph{covering code} (or simply, a code) of relative covering radius $\frac{r}{n}$.

Covering codes play  a central role in  rate distortion theory and source coding. Also,  as a combinatorial object, a covering code can be used as a net to approximate any point in the Hamming space and hence are also natural objects to study.} 

{\paragraph{Our main result} For every $\delta > 0$ we exhibit an ``explicit''\footnote{Our usage of the word ``explicit'' here, while technically correct in the sense that the description of the code is well defined and involves no probability distribution, is somewhat different from the standard usage of the word. We emphasize that our result is definitely of the ``with-high-probability'' mold. } family linear code of block length $n$ of rate $1 - \ent(\delta)$, and relative covering radius $\delta + \tilde{O}_{\delta}(1/\log n)$.}

{This ``gap to optimality'' is an improvement over (to the best of our knowledge) the previously best known construction, which only gave \linebreak $O(1/\sqrt{\log n})$.}

{In this paper, we will mainly be concerned with \emph{linear} covering codes, i.e., $\mathcal{C}$ is a subspace of $\F_2^n$, where one identifies $\F_2^n$ and $\{0,1\}^n$ in the obvious way. A code $\mathcal{C} \subseteq \F_2^n$, is said to have \emph{relative distance} $\delta$ if any two distinct $x,y \in \mathcal{C}$ are at least $\delta n$ apart in Hamming distance. The \emph{rate} of $\mathcal{C}$ is the quantity $\frac{\log_2|\mathcal{C}|}{n}$. As is well-known, the Gilbert--Varshamov (GV) \cite{gilbert-52-gv, varshamov-57-gv} bound says that a packing code of relative distance $\delta$ can achieve rate at least $1-\ent(\delta) - \epsilon$ for every $\epsilon > 0$ as $n$ grows. Here, $\ent(\cdot)$ denotes the binary entropy function. Surprisingly, such a naive bound is still more or less the best we know today. The best upper bound on packing rate is given by the Linear Programming (LP) bound \cite{mrrw-77-mrrw} which does not match GV bound in general.

 For covering, a very straightforward argument, called the \emph{sphere-covering bound} tells us that a code with relative covering radius $\gamma$ has to be of size at least $\frac{2^n}{|B_{\gamma n}|}$, where $ |B_{\gamma n}| $ denotes the volume of a Hamming ball of radius $\gamma n$. In coding theoretic terms (plus standard asymptotics), the \emph{rate} of a covering code with \emph{relative covering radius} $\gamma$ is at least $1-\ent(\gamma)$. By a simple application of the probabilistic method, one can show that a random code of rate $1 - \ent(\gamma) + \Omega(1/n)$ has relative covering radius at most $\gamma$ with high probability. A result of Blinovsky \cite{blinovsky-90-covering-linear} says that this is also true for random linear codes with probability at least $1 - o(1/n)$.}

{The aforementioned theorem of Blinovsky immediately gives a procedure to construct a covering code of relative covering radius $\delta$ and rate $1 - \ent(\delta) - \epsilon$ of block length $n$ in time $\operatorname{poly}(n)$ provided $n \geq \exp\left(O(1/\epsilon^2) \right)$. This is achieved simply by concatenating\footnote{By concatenation, we mean combining inner and outer codes in Forney's sense \cite{forney-65-concat}. If the outer code is not specified, then it is the identity code by default. In this case, concatenation can also be seen as taking direct sum of inner codes.} all possible linear codes. The details of this procedure are discussed formally in Section~\ref{sec:conc}. The improvement in this paper comes from concatenating a much smaller subset of linear codes, most of which attain the sphere-covering bound.}

\subsection{Notation and preliminaries}

\subsubsection{General}

{For a prime power $q$, we use $\F_q$ to denote the finite field of order $q$. As is common, we identify $\F_2^n$ with $\{0,1\}^n$ in the usual way, so for example, our codes are thought of as subsets of $\F_2^n$}
 
{Throughout this paper, we also use the fact that $\F_{2^n}\cong\F_2^n$ as an additive group. Where there is no ambiguity, we refer to an element $x \in \F_{2^n}$ to also mean an element (given by any fixed isomorphism) in $\F_2^n$. Since multiplication by an element $\alpha \in \F_{2^n}$ is a linear operation, it can be thought of as the action of a matrix $M_\alpha\in\F_2^{n\times n}$ on $\F_2^n$.}

{For a subset $S \subseteq \F_2^n$, we use $\langle S \rangle$ to mean the span of elements in $S$, i.e., $\langle S \rangle = \{\sum_{a \in T}a~|~T \subseteq S\}$. For subsets $A,B \subset H$ where $H$ is some group, we denote the sumset $A+B = \{a + b~|~ a \in A,~b \in B\}$. We let $B_r = B_r^n \subseteq \{0,1\}^n$ denote the Hamming ball of radius $r$ centered at $(0,\ldots,0)$. In terms of this notation, $\mathcal{C} \subseteq \F_2^n$ is a covering code with relative covering radius $\delta$ if $\mathcal{C} + B_{\delta n } = \F_2^n$.

\iffalse For a code $\mathcal{C}$, we define the \emph{covering radius} of $\mathcal{C}$, denoted by $r_{\operatorname{cov}}(\mathcal{C})$ as follows:}
\[
r_{\operatorname{cov}}(\mathcal{C}) : = \min\{r~|~\mathcal{C} + B_{r} = \F_2^n\}
\]
i.e., $\mathcal{C}$ is a covering code of relative covering radius $\delta$ only if $r_{\operatorname{cov}}(\mathcal{C}) \leq \delta n$.
\fi
{A code $\mathcal{C} \subseteq \F_2^n$ can also be thought of as a map from $\F_2^k \rightarrow \F_2^n$ where $k =\log_2 |\mathcal{C}| = Rn$, where $R$ is the rate of $\mathcal{C}$. This map is usually called the \emph{encoding map}, and where there is no ambiguity, we think of $\mathcal{C}$ as both a subset of $\F_2^n$ and an encoding map $\mathcal{C} : \F_2^k \rightarrow \F_2^n$. We call $\mathcal{C}$ \emph{linear} if it defines a linear map $x \rightarrow xG$ where $G \in \F_2^{k \times n}$ is called the \emph{generator matrix}.} 

{We will use the fact that for any $\delta \in (0,1)$, it holds that $\frac{1}{n}\log_2(|B_{\delta n}^n|) = \ent(\delta) \pm O\left( \frac{1}{n}\right)$. Finally, throughout the paper, all asymptotics/high probability events are for large/growing $n$.}

\subsubsection{The Wozencraft ensemble}

{The Wozencraft ensemble is the set of linear codes $\mathcal{W}^{(n)} := \{C_\alpha^n~|~\alpha\in\F_{2^n}\}$ of rate $1/2$ defined as follows:
$
C_\alpha^n:  \F_{2^n}  \to  \F_{2^n}\times \F_{2^n}
$
is the code that maps
$
x \mapsto  (x,\alpha x)
$.
{We are using the aforementioned identification of $\F_{2^{n}}$ and $\F_2^{n}$ in the definition above so $C_{\alpha}^n$ produces a code of block length $2n$. Since $\mathcal{W}^{(n)}$ is defined for every $n$, the Wozencraft ensemble is in fact, a \emph{family} of sets of linear codes. For $\alpha \in \F_{2^n}$, the map $C_{\alpha} : \F_2^n \rightarrow \F_2^{2n}$ is given by the generator matrix $C_{\alpha} = [I|M_\alpha ]\in\F_2^{n\times 2n}$.

The Wozencraft ensemble is very interesting and useful in its own right. Most Wozencraft codes $C_{\alpha}^n$ for $\alpha \in\F_{2^n}$ meet the Gilbert--Varshamov bound \cite{massey-63-wozencraft}. Moreover, they can be constructed in  $2^{O(n)}$ time. This makes them possible to use them as inner codes together with other algebraic codes (e.g., Reed--Solomon codes \cite{reed-solomon-60-rs}) as outer codes in various constructions of concatenated codes. This procedure gives rise to explicit codes with constant rate ($R > 0$) and constant distance ($\delta > 0$). Examples include Forney codes \cite{forney-65-concat}, which are polynomial-time constructible\footnote{A linear code is said to be \emph{polynomial-time constructible} if its generator matrix can be computed in poly$(n)$ time. This is what \emph{explicit} usually mean.} and Justesen codes \cite{justesen-72-justesen}, which are locally polynomially computable\footnote{We say that a linear code is \emph{locally polynomially computable} if each entry of the generator matrix can be computed in poly$(\log n)$ time. People also call such codes \emph{fully explicit}. It is a more stringent notion than polynomial-time constructability.}.
}

\subsubsection{Probability}

\paragraph{Chebyshev's Inequality} The first fact we will use is Chebyshev's Inequality which says for a random variable $X$,
$\Prob(|X - \E[X]| \geq t) \leq {\Var(X)}/{t^2}$.

We actually only need a straightforward corollary of this which states that for a nonnegative (discrete) random variable $X$, it is a corollary of Chebyshev Inequality that
\begin{equation}
\label{eqn:Chebyshev}
\Prob(X=0)\le\frac{\Var(X)}{\E^2[X]}.
\end{equation}

\paragraph{A martingale inequality} A sequence of random variables $X_0,X_1,\ldots, X_n$ is called a \emph{martingale} with respect to another sequence of random variables $Y_0, Y_1,\ldots, Y_n$ if for all $i \in [n-1]$, we have $X_i = f_i(Y_1,\ldots Y_i)$ for some function $f_i$, and $\E[X_{i+1}|Y_i,\ldots, Y_1] = X_i$.

We say that a martingale has the \emph{bounded difference property} (or, more precisely, the $1$-bounded difference property) if for every $i \geq 1$, $|X_i - X_{i-1}| \leq 1$. We have the Azuma-Hoeffding inequality for martingales (see~\cite{DP09}) with the bounded difference property which says
\begin{equation}
\label{eqn:azuma}
\Prob(X_n - X_0 \leq -t) \leq \exp\left(\frac{-t^2}{2n}\right).
\end{equation}

We will actually need the following straightforward corollary of this:

\begin{corollary}
\label{corr:martingale}
Suppose that $Y_1,Y_2,\ldots,Y_t$ are Bernoulli random variables such that for every $i \geq 1$, we have that
$\E[Y_i = 1 | Y_{i - 1},Y_{i - 2},\ldots] \geq \frac{1}{2}$.
Then we have 
\[
\Prob(\sum_{i = 1}^tY_i \leq t/4) \leq \exp\left(-\frac{t}{32}\right).
\]
\end{corollary}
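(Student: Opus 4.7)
The plan is to reduce the corollary to the Azuma--Hoeffding inequality \eqref{eqn:azuma} stated just above. The natural move is to replace the unconditional expectation $1/2$ by the actual conditional expectations, which by construction yield a martingale, and then use the hypothesis to bound those expectations from below.

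First I would define $Z_i := \E[Y_i \mid Y_1, \ldots, Y_{i-1}]$ for each $i \in [t]$, set $X_0 := 0$, and
\[
X_i := \sum_{j=1}^{i} (Y_j - Z_j).
\]
By construction $\E[X_{i+1} \mid Y_1, \ldots, Y_i] = X_i$, so $(X_i)_{i=0}^{t}$ is a martingale with respect to $(Y_i)$. Since $Y_j \in \{0,1\}$ and $Z_j \in [0,1]$, we have $|X_i - X_{i-1}| = |Y_i - Z_i| \leq 1$, so the martingale has the $1$-bounded difference property required by \eqref{eqn:azuma}.

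Second, I would use the hypothesis $Z_i \geq 1/2$ (which holds almost surely) to conclude $\sum_{j=1}^t Z_j \geq t/2$ deterministically. Therefore, on the event $\sum_{j=1}^t Y_j \leq t/4$, we have
\[
X_t = \sum_{j=1}^t Y_j - \sum_{j=1}^t Z_j \leq \frac{t}{4} - \frac{t}{2} = -\frac{t}{4}.
\]
Applying \eqref{eqn:azuma} with deviation $t/4$ then yields
\[
\Prob\!\left(\sum_{j=1}^t Y_j \leq \tfrac{t}{4}\right) \leq \Prob\!\left(X_t - X_0 \leq -\tfrac{t}{4}\right) \leq \exp\!\left(-\frac{(t/4)^2}{2t}\right) = \exp\!\left(-\frac{t}{32}\right),
\]
which is exactly the desired bound.

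I don't expect any real obstacle here; the one subtlety worth checking is that the martingale increments are bounded by $1$ rather than $2$, which is what makes the exponent come out to $t/32$ rather than something worse. This is immediate from $Y_i, Z_i \in [0,1]$, so the argument goes through cleanly.
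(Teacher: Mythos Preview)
Your proof is correct and follows the same overall strategy as the paper: build a martingale from the $Y_i$ and apply the Azuma--Hoeffding inequality~\eqref{eqn:azuma} with deviation $t/4$. The only difference is the choice of martingale: the paper uses the Doob martingale $X_i = \E[\sum_j Y_j \mid Y_1,\ldots,Y_i]$, whereas you use the compensated-sum martingale $X_i = \sum_{j\le i}(Y_j - Z_j)$. Your construction has the advantage that the $1$-bounded difference property is immediate from $Y_i, Z_i \in [0,1]$; for the Doob martingale this step requires more care, since without independence revealing $Y_i$ can in principle shift the conditional expectation of later $Y_j$'s as well. So your route is, if anything, a bit cleaner.
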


\begin{proof}
Denote $Y := \sum_{i = 1}^t Y_i$ and define the sequence of random variables $X_0,\ldots, X_t$ where $X_i := \E[Y|Y_1,\ldots,Y_i]$. So we have $X_0 = \E[Y] \geq t/2$ and $X_t = Y$. Moreover, we have

\begin{align*}
\E[X_i |Y_{0},\ldots,Y_{i-1}] &  = \E[\E[Y|Y_{0},\ldots, Y_i]|Y_0,\ldots,Y_{i - 1}] \\
& = \E[Y|Y_0,\ldots, Y_{i-1}] \\
& =  X_{i-1}
\end{align*}

Thus the sequence $X_0,X_1,\ldots,X_t$ is a martingale with respect to $Y_0,Y_1,\ldots,Y_t$ with the bounded difference property, and the claim follows from (\ref{eqn:azuma}).
\end{proof}

\subsection{Our results and proof ideas}
 
Our main technical theorem shows the existence of a relatively few codes, most of which meet the GV bound for covering. More precisely, 

\begin{theorem}\label{thm:main}
There exists a constant $c>0$, $t=c\log n$, such that a linear code $\mathcal{C}$ of length $2n$ given by the generator matrix $G_0\in\F_2^{(n+t)\times (2n)}$ satisfies 
$\Prob(\mathcal{C}+A=\F_2^{2n})=1-o\left({1}/{n}\right)$, 
where the probability is taken over the random construction of $G_0$  defined as follows.
\[
G_0=\left[\begin{array}{c}
     C_{\alpha}  \\\cline{1-1}
     M 
\end{array}\right]
=\left[
\begin{array}{c}
    \begin{array}{c|c}
        I_n & M_\alpha 
    \end{array}\\\cline{1-1}
    M
\end{array}
\right],
\]
 where $\alpha $ in  $\F_{2^n}$ and $M\in\F_2^{t\times (2n)}$ are chosen uniformly at random, and $A$ is a Hamming ball of volume $n^3 \cdot 2^n$ centered at the origin.
In particular, the covering radius of $\mathcal C$ is about $ \ent^{-1}(1/2)\approx0.11n $.
\end{theorem}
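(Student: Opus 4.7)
The plan is a two-stage argument combining the two probability tools provided. First, I would use the second moment method to show that the random Wozencraft code $C_\alpha$ alone already covers all but a small fraction of $\F_2^{2n}$ via $A$, with probability $1-o(1/n)$ over the choice of $\alpha$. Second, I would reveal the rows of $M$ one at a time and argue that each new row roughly squares the uncovered fraction, so that an appeal to Corollary~\ref{corr:martingale} shows $t = c \log n$ rows suffice to drive the uncovered set to empty with probability $1-o(1/n)$. A union bound over the two stages then yields the theorem.

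For the first stage, fix $x \in \F_2^{2n}$ and let $N_x := |C_\alpha \cap (x + A)|$, so that $\E[N_x] = |C_\alpha| \cdot |A| / 2^{2n} = n^3$. To bound $\Var(N_x)$, expand $\E[N_x^2]$ as a sum over ordered pairs of Wozencraft codewords $(u_i, \alpha u_i)$, $i = 1, 2$. The key structural fact is that for distinct nonzero $u_1, u_2$, as $\alpha$ varies uniformly over $\F_{2^n}$, the pair $(\alpha u_1, \alpha u_2)$ is uniform on the size-$2^n$ ``diagonal'' $\{(v, v \cdot u_2 u_1^{-1}) : v \in \F_{2^n}\}$. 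A careful intersection-counting argument, exploiting the Hamming-ball structure of $A$ and its volume $n^3 \cdot 2^n$, should yield $\Var(N_x) = O(\E[N_x]) = O(n^3)$. Chebyshev's inequality \eqref{eqn:Chebyshev} then gives $\Prob[N_x = 0] = O(1/n^3)$, so the uncovered set $U_0 := \F_2^{2n} \setminus (C_\alpha + A)$ satisfies $\E[|U_0|] = O(2^{2n}/n^3)$, and Markov's inequality yields $|U_0| \leq 2^{2n}/4$ with probability $1-o(1/n)$. This variance computation is the main technical obstacle, since Wozencraft codewords are not pairwise independent: they are coupled by the fixed ratio $u_2 u_1^{-1}$, so controlling a sum of the form $\sum_{\gamma \in \F_{2^n}^*} \sum_{y_1, y_2} |A(y_1) \cap \gamma^{-1} A(y_2)|$ uniformly in $\gamma$ is the delicate step.

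For the second stage, condition on an $\alpha$ achieving $|U_0| \leq 2^{2n}/4$ and reveal the rows $r_1, \ldots, r_t$ of $M$ in sequence. Letting $U_i$ denote the uncovered set after step $i$, a direct calculation shows that $U_{i+1} = U_i \cap (U_i + r_{i+1})$ and hence $\E[|U_{i+1}| \mid U_i] = |U_i|^2 / 2^{2n}$. Writing $\beta_i := |U_i|/2^{2n}$, Markov's inequality gives $\Prob[\beta_{i+1} \leq 2 \beta_i^2 \mid U_0, \ldots, U_i] \geq 1/2$, so the indicators $Y_i$ of these events satisfy the hypothesis of Corollary~\ref{corr:martingale}. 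Since $\beta_i$ is non-increasing even on unsuccessful steps, each successful step essentially squares $2\beta_i$, yielding $\beta_i \leq 2^{-2^s - 1}$ after $s$ successful steps starting from $\beta_0 \leq 1/4$. Hence $|U_t| = 0$ whenever $s \geq \log_2(2n)$. Choosing $c$ large enough that $t/4 \geq \log_2(2n)$ and simultaneously $\exp(-t/32) = o(1/n)$, Corollary~\ref{corr:martingale} closes the proof.
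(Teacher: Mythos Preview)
Your two-stage plan and your second stage match the paper exactly: reveal the rows of $M$ one at a time, observe $\E[|U_{i+1}|\mid U_i]=|U_i|^2/2^{2n}$, define the success indicators $Y_i$ via Markov, and invoke Corollary~\ref{corr:martingale}.

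The gap is in stage one. You attempt the second-moment bound on $N_x$ using only the randomness of $\alpha$, and you correctly isolate the obstacle: for distinct nonzero $u_1,u_2$ the pair $(\alpha u_1,\alpha u_2)$ lives on the diagonal $\{(v,\gamma v)\}$ with $\gamma=u_2u_1^{-1}$, so the cross terms reduce to intersections of Hamming-ball slices under field multiplication by an arbitrary $\gamma$. Since multiplication in $\F_{2^n}$ does not respect Hamming weight, there is no evident reason these intersections behave like random sets uniformly in $\gamma$, and you supply none --- you call this ``the delicate step'' and leave it open. The paper sidesteps this with a trick you are missing: instead of $W=\langle C_\alpha\rangle$, analyse the random translate $W'=W+(\mathbf 0,b^{(2)})$ for an independent uniform $b^{(2)}\in\F_{2^n}$. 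Since $|W+A|=|W'+A|$ this is free, but now $\Prob[a\in W']=2^{-n}$ for every $a$, and for pairs $a_1,a_2$ with $a_1^{(1)}\neq a_2^{(1)}$ the events $\{a_i\in W'\}$ decouple (conditioned on $\alpha$, each pins $b^{(2)}$ to a distinct value), giving joint probability exactly $2^{-2n}$. The only surviving correlation comes from pairs with $a_1^{(1)}=a_2^{(1)}$, a set controlled by a short binomial count on $A$. The variance computation then collapses to a few lines, yielding $\Prob[W'\cap A_u=\emptyset]\le(1+o(1))/n^3$ and, via Markov, $|U_0|\le 2^{2n}/n$ with probability at least $1-1/n^2$, which feeds directly into your second stage.
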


{\paragraph{Remarks} Note that our code has rate $R = \frac{n+t}{2n} = \frac{1}{2} + O\left(\frac{\log n}{n}\right)$. Also recall that $A\subset\F_2^{2n}$ is taken to be a ball of volume $n^3\cdot2^n=2^{n+3\log n}$. In other words, $A$ has relative radius $r/n=\ent^{-1}\left(\frac{n+3\log n + O(1)}{2n}\right)  + O\left(\frac{\log n}{n}\right)= \ent^{-1}(1/2) + O\left(\frac{\log n}{n}\right)$. } 

{\paragraph{Proof idea} The proof of this Theorem~\ref{thm:main} is by first proving that for a randomly chosen code $C$ from the Wozencraft ensemble $\mathcal{W}^{(n)}$, it holds that with high probability, $|C + A| = 2^{2n}(1 - n^{-1})$. Thus, $C$ typically covers \emph{most} points in $\F_2^{2n}$ with relative radius $\ent^{-1}(1/2)$. This is proved by a modification of a straightforward second moment argument. The trick is essentially to consider $|C + A + \{b\}|$  (instead of $|C+A|$) for a uniformly chosen $b$. This provides us with some symmetry and extra randomness to carry out the second moment argument to show that for every point $u$, $|C \cap (A + \{u\})|$ is usually concentrated around its expectation and this, most points are covered by $C + A$.}

{Once we have the above claim, we then show that adding a few ($\approx \log n$) random translations of $|C+A|$ covers the whole of $\F_2^{2n}$. This just corresponds to adding the extra rows $M$.}

{This theorem immediately  gives the required construction via the discussion in Section~\ref{sec:conc}. A sketch is as follows:

{Covering radius behaves well w.r.t concatenation of codes, i.e., for codes $\mathcal{C}_1$ and $\mathcal{C}_2$ of covering radii $r_1$ and $r_2$ respectively, their concatenation has covering radius $r_1 + r_2$.}

{Now this immediately implies our construction by concatenating all codes from the support of $\mathcal{C}$ in Theorem~\ref{thm:main} of block length $k = \Theta\left(\frac{\log n}{\log \log n}\right)$ so that $n = k \cdot 2^k$. Theorem~\ref{thm:main} implies that at most a $\frac{1}{k}$ fraction of these codes have relative covering radius worse than $\ent^{-1}(1/2) + O\left(\frac{\log k}{k}\right)$. Thus the relative covering radius of the concatenated code is at most $\ent^{-1}(1/2) + O\left(\frac{\log k}{k}\right)$. Thus working backwards, suppose one wants a covering radius of $\ent^{-1}(1/2) + \epsilon$, the only constraint is that $\epsilon \geq \frac{\log k}{k}$ which is satisfied if $n \geq (1/\epsilon)^{O(1/\epsilon)}$.}

{Although we sketched the proof only for rate close to $\frac{1}{2}$, the construction works verbatim for any other rate $R \in (0,1)$. 
For rates different from $1/2$ we consider punctured codes.
The only difference would be to work with the Wozencraft ensemble $W_n$ restricted to the first $n+ k$ ($k<n$) coordinates. This gives codes of rate close to $\frac{n}{n + k}$, and the rest of the proof remains the same. Thus most of our attention will be focused on the case when the rate is close to $\frac{1}{2}$.}


\subsection{Related work}
Previously, explicit codes with low covering radius were constructed by Pach and Spencer \cite{pach-spencer-88-explicit-low-rcov}. When the covering radius is fixed, the asymptotic dependence on field size of  covering radius was investigated in \cite{davydov-giulietti-marcugini-oambianco-09-lin-nonbin-cov-codes}. Covering codes under different error models and with respect to (wrt) different metrics were also studied in the literature \cite{klove-schwartz-14-cov-codes-magnitude-errors, cooper-ellis-kahng-02-asymm-cov-codes}. Covering codes were also used in covert communication and steganography \cite{zhang-wang-zhang-07-cov-codes-steganography, bierbrauer-fridrich-08-cov-codes-steganography}. 

Beyond the scope of coding theory and information theory, covering codes also found their applications in cryptography \cite{guo-johansson-londahl-14-lpn-cov-codes}, complexity theory \cite{liu-18-chain-cov-codes-sat}, etc. Although we focus on combinatorial aspects of coverings, computational issues \cite{sloane-86-compute-rcov} arsing in  coverings also received significant attention. 

\section{Proof of Theorem~\ref{thm:main}}

First, we prove that a random element of the Wozencraft ensemble almost surely covers \emph{most} points of $\F_2^{2n}$.

\begin{lemma}\label{lem:stepi}
For $G := C_\alpha $ and $A$ as defined above, we have
$\Prob(|\F_2^{2n} \setminus (\langle G \rangle +  A)| \geq (1/n)2^{2n}) \leq {1}/{n^2}$.
\end{lemma}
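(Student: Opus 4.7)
The plan is to use the second-moment method, but applied to a doubly-random setup that averages not only over the Wozencraft parameter $\alpha$ but also over the ``test point'' we are trying to cover. Let $Y := |\F_2^{2n} \setminus (\langle G \rangle + A)|$. By Markov's inequality, it suffices to show $\E_\alpha[Y] \le 2^{2n}/n^3$. Switching the order of summation, this is equivalent to showing that for a uniformly random $b \in \F_2^{2n}$, independent of $\alpha$,
\[
\Prob_{\alpha,b}\bigl(b \notin \langle G \rangle + A\bigr) \;\le\; \frac{1}{n^3}.
\]
This is exactly the ``extra randomness'' trick flagged in the proof idea: rather than analyzing $\Pr_\alpha(u \notin \langle G\rangle + A)$ for a worst-case $u$ (where arithmetic coincidences between $u$ and $A$ can ruin a second-moment bound), we average over $u = b$, which restores the relevant symmetry.

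Next, I would introduce the counting random variable
\[
\tilde N \;:=\; \bigl|\{x \in \F_{2^n} : (x,\alpha x) \in b + A\}\bigr| \;=\; |\langle G\rangle \cap (b+A)|,
\]
using $A = -A$ (as $A$ is a Hamming ball centered at the origin). Then $b \in \langle G \rangle + A$ iff $\tilde N \ge 1$, so by the Chebyshev corollary~\eqref{eqn:Chebyshev},
\[
\Prob_{\alpha,b}(\tilde N = 0) \;\le\; \frac{\Var(\tilde N)}{\E[\tilde N]^2}.
\]
The remaining task is to compute these two moments.

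For the first moment, fix $x$ and note that, for any value of $(x,\alpha x)$, the independent uniform $b$ lies in $(x,\alpha x)+A$ with probability $|A|/2^{2n}$, so
\[
\E[\tilde N] \;=\; 2^n \cdot \frac{|A|}{2^{2n}} \;=\; \frac{n^3 \cdot 2^n \cdot 2^n}{2^{2n}} \;=\; n^3.
\]
For the second moment, expand $\E[\tilde N^2] = \sum_{x,y}\Pr\bigl((x,\alpha x),(y,\alpha y) \in b+A\bigr)$. The diagonal terms $x=y$ contribute $\E[\tilde N] = n^3$. For $x\ne y$, set $z = y-x\ne 0$; then $\alpha z$ is uniform on $\F_{2^n}$, and conditioning on $\alpha$ and integrating out $b$ yields
\[
\sum_{x\ne y}\E_\alpha\!\left[\frac{|A \cap (A + (z,\alpha z))|}{2^{2n}}\right]
\;=\; \frac{1}{2^{2n}}\sum_{z\ne 0,\, w\in\F_2^n} |A \cap (A + (z,w))|
\;\le\; \frac{|A|^2}{2^{2n}} \;=\; n^6,
\]
where I used the elementary identity $\sum_{v\in\F_2^{2n}}|A \cap (A+v)| = |A|^2$. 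Hence $\Var(\tilde N) \le n^3$, so $\Pr(\tilde N = 0) \le n^3/n^6 = 1/n^3$, and Markov closes the argument.

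The only real obstacle is the off-diagonal variance term; the whole point of introducing the independent uniform $b$ is to collapse it into the global sumset identity $\sum_v |A\cap(A+v)| = |A|^2$, which exactly cancels $\E[\tilde N]^2$. Without averaging over $b$, the analogous sum would be restricted to $w$ of the form $\alpha z$ and could concentrate on a bad subset of translates, which is the reason a naive fixed-$u$ second moment does not give a clean bound.
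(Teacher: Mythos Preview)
Your argument is correct and follows the same overall shape as the paper's proof: introduce a random shift $b$, apply the second-moment bound \eqref{eqn:Chebyshev} to the count $|\langle G\rangle \cap (b+A)|$, and finish with Markov's inequality on the number of uncovered points. The one genuine difference is that you take $b$ uniform on all of $\F_2^{2n}$, whereas the paper only randomizes the second block, $b=(\mathbf 0,b^{(2)})$. Your choice is the cleaner one: with $b$ fully uniform, the off-diagonal part of the second moment collapses via the identity $\sum_{v\in\F_2^{2n}}|A\cap(A+v)|=|A|^2$, yielding $\Var(\tilde N)\le \E[\tilde N]$ with no further work. The paper's half-random shift forces a separate combinatorial estimate (the bound on the set $S$ of pairs in $A$ sharing a first coordinate, using Vandermonde's convolution and $r<n/2$) to control the same cross terms. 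Both routes reach $\Pr(\tilde N=0)\le 2^n/|A|\le 1/n^3$, but yours is more elementary and does not use any specific property of the Hamming ball beyond $A=-A$ and $|A|\ge n^3 2^n$.
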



\begin{proof}
Denote $W := \langle G \rangle$. Instead of working with $W$, we will look at $W' := \langle G \rangle + b$ for a randomly chosen $b \in \F_2^{2n}$ of the form $b = (\mathbf{0},b^{(2)})$, where $b^{(2)} \in \F_{2^n}$. 
It is easy to see that $|W + A| = |W' + A|$, and this shift provides us with some symmetry that will help in the analysis. For any $u\in\F_2^{2n}$, let us define $A_u := A + \{u\}$.

Consider any $a = (a^{(1)},a^{(2)}) \in \F_{2^{2n}}^2$. By definition of $W'$, we have that $a \in W'$ means there is some $x \in \F_{2^{2n}}$ such that $(a^{(1)},a^{(2)}) = (x,\alpha x) + (\mathbf{0}, b^{(2)})$. Therefore, we have  
\begin{align*}
    \Prob(a \in W') =& \sum_{\tau\in\F_{2^n}}\Prob(\alpha =\tau)\Prob(b^{(2)}=a^{(2)}-\tau a^{(1)})
    = \frac{1}{2^n},
\end{align*}
and so by linearity of expectation,
$ \E[|W' \cap A_u|] = |A_u| \cdot {2^{-n}} $.

We also have:
\begin{align*}
\E[|W' \cap A_u|^2] & = \sum_{a_1,a_2 \in A_u} \Prob(a_1 \in W' \land a_2 \in W').
\end{align*}

For distinct $a_1,a_2$, we observe that the event $\{a_1 \in W' \land a_2 \in W'\}$ holds if and only if there are distinct $x_1, x_2 \in \F$ such that 

\begin{enumerate}
\item\label{itm:cond1} $(a_1^{(1)},a_1^{(2)}) = (x_1,\alpha x_1) + (\mathbf{0},b^{(2)})$, and
\item\label{itm:cond2} $(a_2^{(1)},a_2^{(2)}) = (x_2,\alpha x_2) + (\mathbf{0},b^{(2)})$.
\end{enumerate}

Clearly, this gives us that $x_1 = a_1^{(1)}$ and $x_2 = a_2^{(1)}$. {Note that in order for both conditions \ref{itm:cond1} and \ref{itm:cond2} to hold, $b$ has to simultaneously satisfy the following equations:
$b^{(2)}=a_1^{(2)}-\alpha a_1^{(1)},\; b^{(2)}=a_2^{(2)}-\alpha a_2^{(1)}$.
For  pairs $(a_1^{(1)},a_2^{(1)})$ and $(a_1^{(2)},a_2^{(2)})$, if \emph{exactly one} pair among $\{a_1^{(1)}, a_1^{(2)}\}$ and $\{a_2^{(1)}, a_2^{(2)}\}$ are equal, then there is no feasible $b$. Otherwise such $b$ exists if $a_1^{(2)}-\alpha a_1^{(1)} = a_2^{(2)}-\alpha a_2^{(1)}$. }

Let $S : = \{(a_1,a_2) \in A^2~|~a_1 \neq a_2~\text{and}~a_1^{(1)} = a_2^{(1)}\}$. Recall that $r$ is the radius of the Hamming ball $A$. We bound
\begin{align}
|S|  = & {\sum_{j\le r}}\sum_{i \leq j}\binom{n}{i}\binom{n}{j - i}\left( \binom{n}{j - i} - 1\right) \notag\\
=&\sum_{j\le r}\sum_{i \leq j}\binom{n}{i}\binom{n}{j - i}\cdot o\left(\binom{n}{n/2}\right)\label{eqn:bound1}\\
\le &\sum_{j\le r}\sum_{i \leq j}\binom{n}{i}\binom{n}{j - i}\cdot o(2^n)\label{eqn:bound2}\\
=&o(2^n)\cdot\sum_{j\le r}\binom{2n}{j}\label{eqn:bound3}\\
 =& o(2^n \cdot |A_u|),\notag
\end{align}
where (\ref{eqn:bound1}) follows since the maximal value of $j-i$ is at most \linebreak $2n\cdot \ent^{-1}(1/2)\cdot(1+o(1))$ which itself is less than $2n\cdot(1/4)= n/2$, (\ref{eqn:bound2}) follows from using that $\binom{n}{n/2} = o(2^n)$, and (\ref{eqn:bound3}) follows from the Vandermonde convolution.


Therefore, back to the second moment calculation, we have:
\begin{align}
&\E[|W' \cap A_u|^2] \notag \\
& = \sum_{a_1,a_2 \in A_u} \Prob(a_1 \in W' \land a_2 \in W') \notag\\
& = \sum_a\Prob(a\in W') \notag \\
&+\sum_{a_1\ne a_2}\sum_{\tau\in\F_{2^n}}\Prob(\alpha =\tau)\Prob(a_1\in W'\land a_2\in W'|\alpha =\tau) \notag\\
& = \E[|W' \cap A_u|] \notag \\
&+ \sum_{\tau\in\F_{2^n}}\Prob(\alpha =\tau) \sum_{\substack{a_1, a_2\in A_u\\a_1^{(1)}\ne a_2^{(1)},a_1^{(2)}\ne a_2^{(2)}\\a_1^{(2)}-\tau a_1^{(1)} = a_2^{(2)}-\tau a_2^{(1)}}}\frac{1}{2^{2n}} \label{eqn:add_explain}\\
& \le |A_u|\cdot \frac{1}{2^n} + \left(|A_u|(|A_u| - 1) - |S|\right)\cdot \frac{1}{2^{2n}} \label{eqn:ineq}\\
& = \frac{|A_u|^2}{2^{2n}} + \frac{|A_u|}{2^n} - \frac{|A_u| + |S|}{2^{2n}},\notag
\end{align}
where Eqn. \ref{eqn:add_explain} follows by interchanging summations and noting that whenever $ a_1\ne a_2 $, $ \{a_1\in W'\} $ and $ \{a_2\in W'\} $ are independent and uniform due to the fact that $b$ was chosen uniformly. Inequality (\ref{eqn:ineq}) follows by dropping the last condition $a_1^{(2)}-\tau a_1^{(1)} = a_2^{(2)}-\tau a_2^{(1)}$.


Therefore, we have 
\begin{align*}
    \Var(|W' \cap A_u|) =& \E[|W' \cap A_u|^2] - \E^2[|W' \cap A_u]\\
    =& \frac{|A_u|}{2^n} - \frac{|A_u| + |S|}{2^{2n}}\\
    =& \frac{|A_u|}{2^n} - \frac{|A_u|+o(2^n|A_u|)}{2^{2n}}\\
    =& \frac{|A_u|}{2^n} - o\left(\frac{|A_u|}{2^n} \right)\\
    =& \frac{|A|}{2^n}(1-o(1)).
\end{align*}

Since the variance is small, inequality \ref{eqn:Chebyshev} gives us that:
\begin{equation}
\label{eqn:nocover}
\Prob(W' \cap A_u = \emptyset) \leq \frac{\Var(|W' \cap A_u|)}{\E^2[|W' \cap A_u|]} =  \frac{2^n}{|A|}(1 + o(1)).
\end{equation}


Let $X_u$ denote the indicator random variable for the event $\{W' \cap A_u = \emptyset\}$, i.e., that $u$ is not covered by $W' + A$. Denoting $X := \sum_{u}X_u$, the bound~(\ref{eqn:nocover}) gives us that $\E[X] \leq 2^{2n}\cdot \frac{2^n}{|A|}$, and so by Markov's inequality, we have that 
\[\Prob(X\geq n^2(2^n/|A|)2^{2n}) \leq \frac{1}{n^2},\]
which gives us the desired claim {by the choice of $|A|$}. 
\end{proof}

In the second phase, we argue that the union of $t$ random translations of the almost covering Wozencraft code obtained in the previous phase will cover the \emph{whole} space with high probability.

Let us call the uncovered points at the current stage $U = U_0:=\F_2^{2n}\setminus(W+A)$, and  let  $C = C_0 := W + A$ be the covered points. For a positive integer $i$, and  a random vector $u_i \in \F_2^{2n}$ independently chosen for each $i$, let $C_i := C_{i-1} \cup (C_{i-1} + u_i)$, and $U_i := \F_2^{2n} \setminus C_i$ denote the set of covered and uncovered points at stage $i$, respectively. Note that at any stage $i$, $C_i\sqcup U_i=\F_2^{2n}$. The following lemma completes the second step in the proof of Theorem~\ref{thm:main}


\begin{lemma}\label{lem:stepii}
There is some constant $k>0$ large enough, such that for $t \geq k \log n$, we have
$\Prob(U_t \neq \emptyset) \leq {1}/{n^2}$.
\end{lemma}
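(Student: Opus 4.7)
The plan is to track how fast the uncovered set $U_i$ shrinks in $i$. First observe that since $C_i = C_{i-1} \cup (C_{i-1} + u_i)$, we have $U_i = U_{i-1} \cap (U_{i-1} + u_i)$, which immediately gives the deterministic monotonicity $|U_i| \leq |U_{i-1}|$. Moreover, for each fixed $u \in U_{i-1}$ the event $u \in U_i$ occurs iff $u - u_i \in U_{i-1}$, which over uniform $u_i \in \F_2^{2n}$ has probability $|U_{i-1}|/2^{2n}$. Linearity of expectation then yields
\[
\E[|U_i| \mid U_{i-1}] \;=\; \frac{|U_{i-1}|^2}{2^{2n}}.
\]
Writing $p_i := |U_i|/2^{2n}$, this says $\E[p_i \mid p_{i-1}] = p_{i-1}^2$: the density of uncovered points squares in expectation at each step.

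To convert this expectation bound into a high-probability bound, I would apply Markov's inequality step by step. Define the Bernoulli indicator
\[
Y_i \;:=\; \mathbf{1}\!\left[\,|U_i| \leq 2\,|U_{i-1}|^2/2^{2n}\,\right].
\]
Markov gives $\Prob[Y_i = 1 \mid U_{i-1}] \geq 1/2$, and hence by the tower rule $\Prob[Y_i = 1 \mid Y_1,\ldots,Y_{i-1}] \geq 1/2$. Corollary~\ref{corr:martingale} then guarantees $\sum_{i=1}^{t} Y_i \geq t/4$ with probability at least $1 - \exp(-t/32)$.

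Next I condition on Lemma~\ref{lem:stepi} (so $p_0 \leq 1/n$) and on the event that at least $s := t/4$ of the $Y_i$'s equal $1$, and track $q_i := 2 p_i$. A successful step gives $q_i \leq q_{i-1}^2$, while a failed step gives only $q_i \leq q_{i-1}$ (using monotonicity of $|U_i|$). Hence regardless of the order of successes and failures,
\[
q_t \;\leq\; q_0^{2^s} \;\leq\; (2/n)^{2^s}.
\]
For $k$ sufficiently large, $s = t/4 \geq (k/4)\log n$ makes $2^s \geq 2n/\log(n/2)$, forcing $q_t < 2^{1-2n}$ and therefore $|U_t| < 1$, i.e., $U_t = \emptyset$. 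Combining the failure probabilities ($1/n^2$ from Lemma~\ref{lem:stepi} and $\exp(-t/32) \leq n^{-\Omega(k)}$ from the martingale estimate) and taking $k$ large enough, the overall failure probability stays below $1/n^2$.

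The delicate point I expect to be the main obstacle is the worry that a failed step could spoil the double-exponential contraction $(2/n)^{2^s}$: iterating an expectation bound of the form $\E[p_i \mid p_{i-1}] = p_{i-1}^2$ via Markov/Chebyshev alone is brittle because a single bad step might in principle inflate $p_i$. The key observation that rescues the argument is the deterministic fact $U_i \subseteq U_{i-1}$, which forces failures merely to \emph{pause} the doubling of the exponent rather than undo it. Once this is in place, counting the good steps via Corollary~\ref{corr:martingale} is routine and closes the proof.
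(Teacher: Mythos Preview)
Your proposal is correct and follows essentially the same approach as the paper: compute $\E[|U_i|\mid U_{i-1}]=|U_{i-1}|^2/2^{2n}$, use Markov to define success indicators $Y_i$ with $\Prob(Y_i=1\mid Y_1,\dots,Y_{i-1})\ge 1/2$, apply Corollary~\ref{corr:martingale} to guarantee $\Omega(\log n)$ successes, and then unwind the doubly-exponential recursion. If anything, your write-up is slightly cleaner than the paper's: you make explicit the deterministic monotonicity $U_i\subseteq U_{i-1}$ and use it to argue that failed steps merely pause the squaring of the exponent, a point the paper's final chain of inequalities uses implicitly but never states.
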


\begin{proof}
First, we observe that
$\E[|U_{i+1}|||U_i|] = {|U_i|^2}/{2^{2n}}$.

Indeed, for any $w \in U_i$, denote $X_w$ as the indicator random variable of the event $\{w \not \in U_{i+1}\}$. We have 
\begin{align*}
    \E[X_u=1||U_i|] = & \Prob(\exists v\in C_i,\;v+u_i=w||U_i|) 
    =  \frac{|C_i|}{2^{2n}}.
\end{align*}
 and linearity of expectation gives us the desired identity:
\begin{align*}
    \E[|U_{i+1}|||U_i|]=&|U_i|\Prob(u\in U_{i+1}||U_i|)\\
    =&|U_i|(1-|C_i|/2^{2n})\\
    =&|U_i|^2/2^{2n}.
\end{align*}

For $i \geq 0$, denote $Y_i$ to be the indicator random variable for the event $\{|U_{i+1}| \leq 2\cdot (|U_i|^2/2^{2n})\}$. Markov's inequality gives us that $\Prob(Y_i = 1) \geq \frac{1}{2}$ for any $U_i$. So, we have that for any $Y_0,\ldots.Y_{i - 1}$,
\[
\E[Y_i|Y_1,\ldots,Y_{i-1}] \geq \frac{1}{2}
\]
and so by Corollary~\ref{corr:martingale}, we have that 

\begin{align}
\Prob(Y \leq 2\log n)  = & \Prob\left(Y\le\left(1-\frac{\mu-2\log n}{\mu}\right)\mu \right)\notag\\
\le&\frac{1}{n^{\Omega(k)}},\label{eqn:choose_c}
\end{align}
where in the last inequality (\ref{eqn:choose_c}) we set $k$ large enough that this probability is less than $n^{-2}$.

It is left to observe that given $\sum_{i = 1}^t Y_i \ge 2\log n$, we have that $U_t = \emptyset$. Indeed, since
\begin{align*}
    |U_t| \le & \left(\frac{2}{2^{2n}}\right)^{1+2+2^2+\cdots+2^{2\log n-1}}|U_0|^{2^{2\log n}}\\
    =&\left(\frac{2}{2^{2n}}\right)^{2^{2\log n}-1}|U_0|^{2^{2\log n}}\\
    =&\left(\frac{2|U_0|}{2^{2n}}\right)^{2^{2\log n}}\frac{2^{2n}}{2}\\
    \le&\frac{1}{2}\cdot2^{-n^2\log\frac{n}{2}+2n}\\
    <&1.
\end{align*}
\end{proof}

Finally, to finish the proof, let
\begin{align*}
    E := & \{C+A=\F_2^{2n}\},\\
    E_1:=&\{|\langle G\rangle +A|>(1-1/n)\cdot2^{2n}\}.
\end{align*}
Overall we have that
\begin{align*}
    \Prob(E)\ge&\Prob(E_1)\Prob(E|E_1)\\
    \ge&(1-1/n^2)\cdot(1-1/n^2)\\
    =&1-O\left(\frac{1}{n^2}\right).
\end{align*}

\subsection{Covering radius of concatenated codes}
\label{sec:conc}

Here we show that covering codes behave well under concatenation.


For codes $C_1 \subseteq \{0,1\}^{n_1}$ of distance $d_1$ and rate $R_1$ and $C_2 \subseteq \{0,1\}^{n_2}$ of distance $d_2$ and rate $R_2$, their direct sum $C_1\oplus C_2$ is defined as
\[C_1\oplus C_2 = \{(x_1,x_2)\suchthat x_1\in C_1,\;x_2\in C_2\}.\]
Clearly,  $C_1\oplus C_2$ has blocklength $n_1+n_2$, rate $R_1+R_2$ and minimum distance $\min\{d_1,d_2\}$. It is not hard to see that the covering radius of a direct sum code is the sum of its components, i.e., $\rcov(C_1\oplus C_2)=\rcov(C_1)+\rcov(C_2)$. Indeed, since for every point every point $(y,z) \in \{0,1\}^{n_1 + n_2}$, $y$ is at most a distance $\rcov(C_1)$ from $C_1$ and $z$ is at most at distance $\rcov(C_2)$ from $C_2$.

If $C_1$ and $C_2$ are linear codes generated by matrices $G_1\in\F_2^{n_1R_1\times n_1}$ and $G_2\in\F_2^{n_2R_2\times n_2}$, then the direct sum has a block diagonal generator matrix $G$ of the following form:
\[G=\begin{bmatrix}
G_1 & \\
 & G_2
\end{bmatrix}.\]

Since we known that our construction is covering almost surely, the direct sum operation allows us to construct explicit covering codes by concatenating all Wozencraft-type codes. Let $n':=n+t$. We just put all matrices $\{G_i\}_{i=1}^N$ of the form $G_0$ defined in Theorem \ref{thm:main}   along the diagonal and get a matrix $G$ of size $n'\cdot N$ by $(2n)\cdot N$. These matrices generate the Wozencraft-type ensemble $\{C_i\}_i$ and there are $N:=2^n\cdot2^{t\cdot 2n}=2^{O(n\log n)}$ many such matrices in total. This operation  results in a code $C$ with generator matrix $G$ of blocklength $(2n)\cdot N$ and rate $\frac{1}{2} + O\left( \frac{\log n}{n}\right)$. The relative covering radius of the direct sum is at most
\begin{align*}
    \rcov\left(\bigoplus_{i=1}^N C_i\right)
    &\le(1-1/n^2)^2\cdot \ent^{-1}\left(\frac{n+3\log n + O(1)}{2n} \right) \notag \\
    &+ (1-(1-1/n^2)^2)\cdot 1\\
    &=\ent^{-1}(1/2) + O\left(\frac{\log n}{n}\right).
\end{align*}

It remains to observe that $\frac{\log n}{n} = \tilde{O}\left( \frac{1}{\log (2n \cdot N)}\right)$.

\paragraph{A note on other rates}
The above construction started off with an ensemble of codes $\mathfrak{C}$ of rate $\frac{1}{2}$ (Wozencraft).  However, the construction can be generalized for other rates in a standard way. The only thing that was used about the Wozencraft ensemble in the proof of Theorem~\ref{thm:main} was that $\mathfrak{C}$ was supported on $2^n$ codes, and had the following property. Fix a message $m$, and choose a random code $C \in \mathfrak{C}$, $C$ sends the message $m$ to $(m,x)$ where $x$ is a uniformly random element of $\F_2^n$. One can check that the proof works verbatim when $\mathfrak{C}$ is supported on $2^k$ codes, and a random $C \in \mathfrak{C}$ sends a message $m$ to $(m,x)$ where $x$ is uniform in $\F_2^k$. Therefore, one can restrict the Wozencraft ensemble to a set of coordinates (or \emph{puncture} it) to achieve different rates. For an $k \times n$ matrix $M$, for $S \subset [k]$ and $T \subset [n]$, we use $M[S,T]$ to denote the submatrix where the rows are indexed by $S$ and columns are indexed by $T$. For every generator matrix $C_{\alpha}=[I|M_\alpha]$ from the Wozencraft ensemble, denote
\[
G_k=
\left[
\begin{array}{c|c}
I_n[[k],[k]] & M_\alpha[[k],[n]] \\
\end{array}
\right].
\]

Given any message $m' \in \F_2^k$, one can check that a randomly chosen $G_k$ takes $m'$ to $(m',x')$ where $x'$ is a uniform point in the row span of $M_\alpha[[k],[n]]$. To see this, note that since $\alpha m$ is uniform in $\F_2^n$, take 
\[m=(m'(1),\cdots,m'(k),\underbrace{0,\cdots,0}_{n-k})\in\F_2^n.\]
The image of each such $m$  under $M_\alpha$ is exactly $m'\cdot M_\alpha[[k],[n]]$ which is equal to $x'$ and hence each $x'$ is equally likely to be output.
Therefore, by a similar proof as above, one can check that the code generated by a randomly chosen matrix $G_0$ given by:
\[
\left[
\begin{array}{c}
G_k \\
\hline
M
\end{array}
\right]
\]
is almost surely a good covering code of rate $\frac{k+t}{n+k}=\frac{nR+c\log n}{n+nR}\stackrel{n\to\infty}{\to}\frac{R}{1+R}$ if we denote $k=nR$. Similar arguments show that one can truncate $G$ as
\[
G^k=
\left[
\begin{array}{c|c}
I_n & M_\alpha[[n],[k]] \\
\end{array}
\right]
\]
to get a code of rate $\frac{n+t}{n+k}=\frac{n+c\log n}{n+nR}\stackrel{n\to\infty}{\to}\frac{1}{1+R}$.

\section{Open problems}
The obvious first open question is to construct a family of explicit covering codes of block length $n$, rate $1 - \ent(\delta)$ and covering radius $\delta + n^{- \Omega(1)}$.

We believe that in Theorem~\ref{thm:main}, the rows $M$ are just an artifact for the analysis. In particular, we believe that a random code from the Wozencraft ensemble has covering radius $\ent^{-1}(1/2) + o(1)$. 

These extra rows $M$ are a barrier to understanding the covering radius of several other distributions over codes, for example, low-density parity-check (LDPC) codes, etc..

A \emph{quasicyclic} code $C\le\F_2^{2n}$ is a linear code of rate $1/2$ spanned by the rows of a matrix of the form $G=[I|Q]$, where $I\in\F_2^{n\times n}$ and $Q\in\F_2^{n\times n}$ is a \emph{circulant} matrix
\[M=\begin{bmatrix}
-r_1-\\
-r_2-\\
\cdots\\
-r_n-
\end{bmatrix}.\]
For any $i\in[n-1]$, the $(i+1)$-th row  is a one-bit right-shift of the $i$-th row, i.e., $r_{i+1}=\sigma(r_{i})$ where $\sigma\in S_n$ is a permutation
\begin{align*}
    \sigma=\begin{pmatrix}
    1 & 2 & 3 & \cdots  & n\\
    n & 1 & 2 & \cdots  & n-1
    \end{pmatrix}.
\end{align*}
If we sample a row $r$ uniformly at random from $\F_2^n$ and construct a corresponding code $C$, then $C$ is known \cite{gaborit-zemor-08-improve-gv} to attain GV bound with high probability. Actually, it beats GV bound by some lower order factor and is the best asymptotic existence result in the constant relative minimum distance regime. However, we are unable to show its covering property. One challenge that this shares with the Wozencraft ensemble is that there is only a small amount of randomness in the construction. The whole matrix $G$ is completely determined once any row or column of $Q$ is sampled.

\bibliographystyle{alpha}
\bibliography{references} 

\end{document}